\newtheorem{Definition}{Definition}
\newtheorem{Problem}{Problem}
\newtheorem{Lemma}{Lemma}
\newtheorem{Solution}{Solution}
\newtheorem{Baseline}{Baseline}
\title{Cellular Offloading via Downlink Cache Placement}
\author{Bojie Lv, Lexiang Huang and Rui Wang\\
The Southern University of Scienece and Technology\\
Email: \{lvbj, huanglx\}@mail.sustc.edu.cn, wang.r@sustc.edu.cn} 
\begin{document}

\maketitle
\begin{abstract}
In this paper, the downlink file transmission within a finite lifetime is optimized with the assistance of wireless cache nodes. Specifically, the number of requests within the lifetime of one file is modeled as a Poisson point process. The base station multicasts files to downlink users and the selected the cache nodes, so that the cache nodes can help to forward the files in the next file request. Thus we formulate the downlink transmission as a Markov decision process with random number of stages, where transmission power and time on each transmission are the control policy. Due to random number of file transmissions, we first proposed a revised Bellman's equation, where the optimal control policy can be derived. In order to address the prohibitively huge state space, we also introduce a low-complexity sub-optimal solution based on an linear approximation of the value function. The approximated value function can be calculated analytically, so that conventional numerical value iteration can be eliminated. Moreover, the gap between the approximated value function and the real value function is bounded analytically. It is shown by simulation that, with the approximated MDP approach, the proposed algorithm can significantly reduce the resource consumption at the base station.
\end{abstract}

\section{introduction}\label{sec:intro}
Caching is a promising technology for future cellular networks, which could improve the network spectral efficiency \cite{Leung2014} or cut down energy consumption \cite{ISWCS,Vincent2013} by exploiting the repeated transmissions of the same content. With the wired connection between cache nodes and base station (BS), the buffer capacity limitation at the cache nodes becomes the major performance bottleneck, and there have been a number of research efforts spent on the file placement of the cache nodes. For example, in order to offload more work to cache nodes, a file placement algorithm was derived in \cite{Leung2016}. A long-term file placement policy was proposed in \cite{Vincent2013} to minimize the transmit power in BS. In \cite{Debbah2014}, it is shown that prediction of future demand will improve the performance of file placement at the cache nodes.

All the above works assume that there are wired links between the cache nodes and BSs. It might be costly to deploy cache nodes with wired connection in some areas, and hence the cache placement via wireless links (e.g., cellular downlink) should also be studied \cite{Survey2015}. In \cite{WCNC2017}, the transmissions from BS to cache nodes and from cache nodes to users share the same spectrum. However, it neglects the possibility that both cache nodes and users can listen to the BS simultaneously via a multicast mode. In fact, if file caching has to be made via downlink, the phase of cache placement can be coupled with the phase of serving requesting users. For example, at the first few transmissions of one file, both cache nodes and requesting users can listen to the BS simultaneously; and the cache nodes can help to forward the file as long as they have been able to decode it. Thus the downlink strategy should be optimized spanning the whole lifetime of a file (includes the transmission to both cache nodes and users). However, this has not been addressed by the existing literature.

In this paper, we would like to shed some light on the above open issue by considering the downlink file transmission with the assistance of cache nodes, which can only receive data via downlink transmission. The communication links between cache nodes and users are via different spectrum from the downlink (e.g., Wi-Fi) as \cite{May2016,Molisch2016}. Hence the BS try to minimize the average downlink resource consumption by offloading some of the traffic to cache nodes. Specifically, we model the downlink transmission of one file within his lifetime as a Markov decision process (MDP) with random number of stages. Note that this is not conventional MDP problem with finite and fixed number of stages, we propose a revised version of Bellman's equation, where the optimal control policy can be obtained given the value function. Then we introduce a linear approximation on the value function so that the exponential complexity can be reduced into linear. The bounds on the approximation error are also obtained. It is shown by simulations that the proposed scheme can significantly suppress the overall transmission consumption compared with baselines. 

The remainder of this paper is organized as follows. In Section \ref{sec:model}, the system model is introduced. In Section \ref{sec:formulation}, we formulate the downlink resource allocation as a MDP with random number of stages. In Section \ref{sec:optimal}, the approach to obtain the optimal control policy is explained. In Section \ref{sec:approximation}, a linear approximation is proposed to the value function, and the bound on approximation error is derived. The numerical simulation is provided in Section \ref{sec:sim} and the conclusion is drawn in Section \ref{sec:con}.

\section{System Model}\label{sec:model}

\subsection{Downlink File Request Model}

The downlink transmission in a cell with one multi-antenna BS, $N_C $ single-antenna cache nodes is considered. Let $ N_T $ be the number of antennas at the BS. Without loss of generality, it is assumed that the BS locates in the origin, and the locations of cache nodes are denoted as $ \mathbf{c}_1, \mathbf{c}_2, ..., $ and $ \mathbf{c}_{N_C} $ respectively, which can be arbitrary in the cell coverage. Let $ \mathcal{C}_i $ ($ \forall i=1,2,...,N_C $) be the coverage region of the $ i $-th cache node.

The downlink data is organized by files. Similar to most of the existing literature, it is assumed that each file consists of $ R_f $ information bits, and it is further divided into $ N_S $ segment equally. These files can be popular videos, breaking news or other web content, which might be requested by multiple users in the cell. Note that the popularity of web content depends heavily on the time. For example, breaking news may not be popular anymore after one day. To model the dynamics in file popularity, it is assumed that each file has a lifetime $ T $. One file will be downloaded in the cache nodes only during its lifetime. For the elaboration convenience, we consider the same file size and lifetime. In fact, our approach is applicable for heterogeneous file sizes and lifetimes.

Users appear randomly in the cell coverage to request files via BS. Suppose the $ f $-th file ($ f=1,2,... $) is available for access since time instance $ t_f $, we consider the requests on this file happen during the time period $ [t_f, t_f+T] $. The Poisson point process is adopted to model the event of file request within a file's life time, and $ \lambda_f $ is the process intensity of the each file. Note that in queueing theory, Poisson point process is widely accepted to model random arrival events, such as the arrival of customers at a store or phone calls at an exchange. Hence, the probability mass function of request number for the $ i $-th file ($ \forall i $), when the remaining life time is $ T_{rem} $, is given by
\begin{equation} \label{eqn:request}
\Pr(\mbox{Request number} = n) = \frac{(\lambda_f T_{rem})^n}{n!} e^{-\lambda_f T_{rem}}.
\end{equation}

\subsection{Wireless Caching Model}

We consider the scenario that wired connections between BS and cache nodes are not available, hence the cache nodes can only receive the files from the downlink. Compared with wired connection or dedicated spectrum for the communications between cache nodes and BS, this scenario provides more flexibility on deployment. As a result, there are two modes of data transmission in the network.

$ \bullet $ {\bf Downlink multicast:} the BS delivers the requested file segments to user and some of the cache nodes. This will happen when the requested segments cannot be found in the cache nodes nearby. For example, if one user is within the coverage of $ \mathcal{C}_i $, it will receive the file segments from the BS when these file segments cannot be found in the $ i $-th cache node. Since both requesting user and cache nodes can listen when the BS is transmitting, the transmission mode is multicast.
	
$ \bullet $ {\bf Device-to-device (D2D) communications:} the cache nodes forward the requested file segments to users. This will happen when the requested segments hit the buffer of nearby cache nodes. For example, if one user is within the coverage of $ \mathcal{C}_i $, it will receive the file segments from the $ i $-th cache node when the file segments can be found there. This communication is made directly from cache nodes to users. It can use Wi-Fi, bluetooth, or other air interfaces, which is not in the same spectrum as downlink. For example, the offloading from cellular to Wi-Fi has attracted a number of attenuation recently \cite{May2016} \cite{Molisch2016}.

Since the second mode of communications can be done distributively and parentally with relatively low transmission power, we consider the transmission resource consumption at the BS only, which is the bottleneck of the overall system. 

The first transmission of each file is made directly from the BS, while some cache nodes may also be able to decode the whole file or some segments. Hence, for the following requests of the same file, user will receive one segment from the cache node if the following two conditions are satisfied: (1) the user is in the coverage of certain cache node; (2) the aforementioned file segment has been successfully decoded by the aforementioned cache node. For elaboration convenience, we shall refer to the user, which sends the $ n $-th request on the $ f $-th file, as the $ (f,n) $-th user, and refer to the $ s $-th segment of the $ f $-th file as the $ (f,s) $-th segment. Since the data rate of wireless communications is usually much smaller than wired, and each file has finite lifetime, we ignore the limitation on cache buffer in this paper \footnote{For example, suppose that one BS is transmitting downlink files with overall data rate of $ 1 $ Gbps, and the lifetime of each file is $ 24 $ hours. Then the maximum required storage capacity of one cache node is around $ 10 $T bytes, which is a mild requirement for ignoring the buffer capacity limitation. }.

\subsection{Downlink Physical Layer Model}

In downlink, the receivers include the requesting user and cache nodes, and the space-time block code (STBC) with full diversity is used at the BS to facilitate the multicast communications. The benefits of STBC in  multicast are that (1) the BS need not to collect channel state information (CSI); (2) the full diversity can be achieved at all the receivers.

Since the transmission time of one file segment is much larger than the channel coherent time, it is assumed that the ergodic channel capacity span all possible small-scale channel fading can be achieved during one segment transmission. Let $ \rho_{f,n} $ and $ \rho_c $ be the pathloss from the BS to the $ (f,n) $-th user and the $ c $-th cache node respectively, $ \eta_{f,n,s} $ and $ \eta_{f,n,s}^c $ be the corresponding shadowing attenuation in $ n $-th transmission of the $ (f,s) $-th segment, $ P_{f,n,s} $ be the downlink transmission power of the $ s $-th file segment in response to request of the $ (f,n) $-th user, $ N_{f,n,s} $ be the number of downlink transmission symbols scheduled to deliver the $ s $-th segment to the $ (f,n) $-th user, the throughput achieved by the $ (f,n) $-th downlink user in the transmission of the $ s $-th segment is given by
\begin{equation}\label{eqn:dl-rate}
R_{f,n,s} = N_{f,n,s} \mathbb{E}_{\mathbf{h}_{f,n,s}} \left[ \log_2 \left( 1 + \frac{||\mathbf{h}_{f,n,s}||^2 P_{f,n,s}}{N_T \sigma^2_z} \right) \right],
\end{equation}
where $ \sigma_z^2 $ is the power of noise as well as inter-cell interference, $ \mathbf{h}_{f,n,s} $ is the i.i.d. (independently and identically distributed) channel vector from the BS to the request user. Each element of $ \mathbf{h}_{f,n,s} $ is complex Gaussian distributed with zero mean and variance $ \rho_{f,n}\eta_{f,n,s} $. As a remark note that the transmission of one segment may consume a large number of frames, and the channel vector $ \mathbf{h}_{f,n,s} $ can be different from frame to frame. However, since we consider the ergodic channel capacity, the randomness in small-scale fading is averaged. Hence, the $ (f,n) $-th user can decode the $ s $-th segment only when
\begin{equation}\label{eqn:cont}
R_{f,n,s}  \geq R_f/N_S.
\end{equation}

Moreover, the achievable data rate of the $ c $-th cache node is given by
\begin{equation}\label{eqn:dl-cache}
R_{f,n,s}^c = N_{f,n,s} \mathbb{E}_{\mathbf{h}_{f,n,s}} \left[ \log_2 \left( 1 + \frac{||\mathbf{h}_{f,n,s}^c||^2 P_{f,n,s}}{N_T \sigma^2_z} \right) \right],
\end{equation}
where $ \mathbf{h}_{f,n,s}^c $ is the i.i.d. channel vector from the BS to $ c $-th cache node. Each element of $ \mathbf{h}_{f,n,s}^c $ is complex Gaussian distributed with zero mean and variance $ \rho_{c}\eta_{f,n,s}^c $. The $ c $-th cache node can decode the $ s $-th segment only when $
R_{f,n,s}^c  \geq R_f/N_S. $
In the downlink transmission of one file, the location of requesting user, thus $ \rho_{f,n} $,  is assumed to be static. Moreover, the shadowing effect $ \eta_{f,n,s} $ and $ \eta_{f,n,s}^c $ is assumed to be i.i.d. for different segments.

\subsection{Control Policy}

There is an scheduling issue in the downlink transmission: if more power or time resource is spent in downlink file transmission, more cache nodes are able to buffer the file (or some segments of the file), which may save the resource of the BS in the successive transmission of the same file. Thus it is necessary to optimize the resource allocation during the whole lifetime of one file, instead of single transmission. Since this is a multi-stage optimization problem, we shall formulate it as finite-stage MDP. In this section, the system state and control policy will be defined. 

\begin{Definition}[System State]
	Before the transmission to the $ (f,n) $-th user, the system status is uniquely specified by $ S_{f,n} = \left[\mathcal{B}_{f,s}^c,\rho_{f,n} , \eta_{f,n,s}, \eta_{f,n,s}^c| \forall c=1,...,N_C; s=1,...,N_S  \right]$, where $ \mathcal{B}_{f,s}^c =1 $ means that the $ s $-th segment of the $ f $-th file has been successfully decoded by the $ c $-th cache node and $ \mathcal{B}_{f,s}^c =0 $ means otherwise.
\end{Definition}

Then the downlink scheduling policy is defined as follows.
\begin{Definition}[Control Policy]
	Suppose the $ s $-th segment is transmitted to the $ (f,n) $-th user via downlink. Given the system state $ S_{f,n} $, the scheduling policy $ \Omega_{f,n} $ ($ \forall f,n $) is a mapping from system state and the remaining lifetime $ T_{f,n} $ to the scheduling parameters $ P_{f,n,s} $ and $ N_{f,n,s} $ ($ \forall s $). Thus $\Omega_{f,n}( S_{f,n}, T_{f,n})= \{(P_{f,n,s}, N_{f,n,s}) | \forall s\} $. Moreover, in order to guarantee the requesting user can successfully decode the downlink data, the constraint in (\ref{eqn:cont}) should be satisfied.
\end{Definition}

\section{Problem Formulation} \label{sec:formulation}
In this section, we shall formulate the downlink resource control problem as a MDP with random number of stages. Let $ \mathcal{C}_{f,n}^s = \cup_{\forall i, \mathcal{B}_{f,s}^i=1} \mathcal{C}_i $ be the area where the requesting users is able to receive the $ s $-th segment of the $ f $-th file from one of the cache nodes, and $ \mathbf{l}_{f,n} $ be the location of the $ (f,n) $-th user. Hence we use the following cost function to measure the weighted sum of energy and transmission time of the BS, which is spent on the $ (f,n) $-th user.
\begin{equation}\nonumber
g_{f,n,s} = I(\mathbf{l}_{f,n} \notin \mathcal{C}_{f,n}^s) \times (w_e P_{f,n,s}N_{f,n,s} + w_t N_{f,n,s}),
\end{equation}
where $ w_e $ and $ w_t $ are the weights on transmission energy and transmission opportunities respectively, and $ I(\cdot) $ is the indication function. Since transmission from cache nodes to users is made via other air interfaces, the transmission resource used by the cache nodes is not counted in this cost function. Hence the average cost spent on the $ f $-th file is given by
\begin{equation}\nonumber
\overline{g}_{f} \left( \{\Omega_{f,n}|\forall n\} \right) = \sum_N \mathbb{E}_{\eta,\rho} \left[ \frac{(\lambda_f T)^N}{N!} e^{-\lambda_f T} \sum_{n=1}^{N} \sum_{s=1}^{N_S} g_{f,n,s}\right],
\end{equation}
where the expectation is taken over all possible large-scale channel fading (including the shadowing effect $ \eta $ and pathloss $ \rho $) in the system. The summation on $ N $ is due to the random number of requests as elaborated in (\ref{eqn:request}).

As a result, the overall system cost function is given by
\begin{equation} \nonumber
\overline{G}(\{\Omega_{f,n}|\forall f,n\}) = \lim_{F \rightarrow +\infty} \frac{1}{F}\sum_{f=1}^{F} \overline{g}_{f} \left( \{\Omega_{f,n}|\forall n\} \right),
\end{equation}
and the system optimization problem can be written as

\begin{Problem}[Overall System Optimization] \label{prob:overall}
\begin{eqnarray}
&\min\limits_{\{\Omega_{f,n}|\forall f,n\}} &\overline{G}(\{\Omega_{f,n}|\forall f,n\}) \nonumber\\
&s.t.& (\ref{eqn:cont}),\  \forall f,n,s. \nonumber
\end{eqnarray}
\end{Problem}

Since there is no constraint on the cache nodes' buffer capacity and transmission resources, the above optimization problem can be further decoupled into the following sub-problems with respect to each file.
\begin{Problem}[Optimization on the $ f $-th File]\label{prob:main}
\begin{eqnarray}
&\min\limits_{\{\Omega_{f,n}|\forall n\}} &\overline{g}_f(\{\Omega_{f,n}|\forall n\}) \nonumber\\
&s.t.& (\ref{eqn:cont}), \ \forall n,s. \nonumber
\end{eqnarray}
\end{Problem}

\section{Optimal Control Policy} \label{sec:optimal}

Note that Problem \ref{prob:main} is a dynamic programming problem with random number of stages, which cannot be solved by the standard approach as \cite{MDP1}. We shall shown  in this section that the optimal solution of Problem \ref{prob:main} (i.e. policy iteration) can be obtained by applying value iteration on another MDP problem (Problem \ref{prob:fix} as follows)  with finite and fixed number of stages first, then solving a revised version of Bellman's equation. First of all, we introduce the following MDP problem with fixed number of stages.

\begin{Problem}[Optimization with Fixed Stage Number] \label{prob:fix}
	\begin{eqnarray}
	&\min\limits_{\{\Omega_{f,n}|\forall n\}} &\mathbb{E}_{\eta,\rho} [\sum_{n=1}^{N_R}\sum_{s=1}^{N_S} g_{f,n,s}]\nonumber\\
	&s.t.& \mbox{(\ref{eqn:cont})}, \ \forall n,s. \nonumber
	\end{eqnarray}
	where $ N_R $ is the fixed number of requests on the $ f $-th file.
\end{Problem}

As introduced in \cite{MDP1}, there is standard solution for the MDP problem with finite and fixed number of stages. The optimal solution of Problem \ref{prob:fix} can be deduced via the Bellman's equation in (\ref{eqn:bellman-fix}) on the top of next page, where $ V_{N_R-n+1}(S_{f,n}) $ is usually named as value function of the $ n $-th stage, and $ S_{f,n+1} $ denotes the next state of the $ f $-th file. In fact, $ V_{N_R-n+1}(S_{f,n}) $ means the average remaining cost of the $ f $-th file from the $ n $-th transmission to the $ N_R $-th transmission, given the system state of the $ n $-th stage $ S_{f,n} $.

Note that the large-scale fading is i.i.d. in each file transmission, the expectation on large-scale fading can be taken on both side of the above equation. Hence we have the following conclusion, whose proof is straightforward and neglected here.

\begin{Lemma}[Bellman's Equation with Reduced Space]
The optimal control policy of Problem \ref{prob:fix} is the solution of the Bellman's equation with reduced state space in (\ref{eqn:bellman-reduce}), where $\widetilde{S}_{f,n}= \{\mathcal{B}_{f,s}^c \in S_{f,n}|\forall c,s \}$, $\widetilde{V}_{N_R-n}(\widetilde{S}_{f,n})=\mathbb{E}_{\eta,\rho}[V_{N_R-n}(S_{f,n+1})] $, and $ \Omega_{f,n}(\widetilde{S}_{f,n}) = \{\Omega_{f,n}(S_{f,n})|\forall \rho_{f,n} , \eta_{f,n,s}, \eta_{f,n,s}^c, c \} $.
\end{Lemma}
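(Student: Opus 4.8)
The plan is to start from the finite-horizon recursion (\ref{eqn:bellman-fix}) and use the temporal independence of the large-scale fading to integrate the fading coordinates out of the value function, leaving a recursion driven by the buffer indicators alone.

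First I would split the full state $S_{f,n}=[\mathcal{B}_{f,s}^c,\rho_{f,n},\eta_{f,n,s},\eta_{f,n,s}^c]$ into its persistent part, the buffer indicators $\widetilde{S}_{f,n}=\{\mathcal{B}_{f,s}^c\}$, and its transient part, the fading/location block $(\rho_{f,n},\eta_{f,n,s},\eta_{f,n,s}^c)$, and examine the one-step transition separately for each. The decisive observation is that, because the ergodic rates in (\ref{eqn:dl-rate}) and (\ref{eqn:dl-cache}) are deterministic functions of the allocated $(P_{f,n,s},N_{f,n,s})$ and of the fading already present in $S_{f,n}$, every decoding event $R_{f,n,s}^c\ge R_f/N_S$ is deterministic given $(S_{f,n},\Omega_{f,n})$; hence the buffer indicators at stage $n+1$ are a deterministic function of $(S_{f,n},\Omega_{f,n})$. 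The only stochastic component of $S_{f,n+1}$ is its fading/location block, which by the i.i.d. shadowing assumption and the independence of successive requests is a fresh draw, independent of the entire past.

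The key step is to take $\mathbb{E}_{\eta,\rho}$ of the continuation term $V_{N_R-n}(S_{f,n+1})$ in (\ref{eqn:bellman-fix}). Writing $S_{f,n+1}$ as its deterministic buffer part together with its independent fresh fading, and integrating only over that fresh fading, produces a quantity that depends on the next state exclusively through its buffer indicators; this is precisely the object defined in the lemma as $\widetilde{V}_{N_R-n}(\widetilde{S}_{f,n})=\mathbb{E}_{\eta,\rho}[V_{N_R-n}(S_{f,n+1})]$. Substituting this into (\ref{eqn:bellman-fix}) removes the fading from the cost-to-go, and averaging both sides over the current-stage fading/location block then gives (\ref{eqn:bellman-reduce}): the left-hand side collapses to the reduced value function by the same definition, while on the right the minimization over $\Omega_{f,n}$ must stay inside the expectation, since the controller sees the realized $(\rho_{f,n},\eta_{f,n,s},\eta_{f,n,s}^c)$ before choosing $(P_{f,n,s},N_{f,n,s})$. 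The induced reduced policy is exactly the family of per-realization minimizers $\Omega_{f,n}(\widetilde{S}_{f,n})=\{\Omega_{f,n}(S_{f,n})\,|\,\forall\rho_{f,n},\eta_{f,n,s},\eta_{f,n,s}^c,c\}$.

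The one point that needs care---and the reason the reduction is not merely cosmetic---is the ordering of $\min$ and $\mathbb{E}_{\eta,\rho}$: because both the stage cost $g_{f,n,s}$ and the induced next buffer state depend on the realized fading, averaging (\ref{eqn:bellman-fix}) yields $\mathbb{E}_{\eta,\rho}[\min_{\Omega_{f,n}}(\cdot)]$ rather than $\min\mathbb{E}_{\eta,\rho}[\cdot]$, with the decoding constraint (\ref{eqn:cont}) imposed pointwise in each realization. Once this is granted, the deterministic buffer update and the cancellation of the current-stage fading after averaging are routine, which is why the statement may be asserted with its proof omitted.
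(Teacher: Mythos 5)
Your proposal is correct and follows exactly the route the paper sketches (and then omits as ``straightforward''): exploit the i.i.d.\ large-scale fading across transmissions to take $\mathbb{E}_{\eta,\rho}$ on both sides of (\ref{eqn:bellman-fix}), so that the continuation value depends on the next state only through the buffer indicators, with the minimization remaining per-realization inside the expectation because the reduced policy $\Omega_{f,n}(\widetilde{S}_{f,n})$ is by definition a family of actions indexed by the realized fading. Your added observations---that the buffer update is deterministic given $(S_{f,n},\Omega_{f,n})$ under the ergodic-capacity model, and that the reduction yields $\mathbb{E}[\min(\cdot)]$ rather than $\min\mathbb{E}[\cdot]$---are precisely the details the paper leaves implicit, and they are handled correctly.
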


The standard value iteration can be used to solve the Bellman's equation (\ref{eqn:bellman-reduce}), and obtain the value function $ \widetilde{V}_{N_R-n+1}(\widetilde{S}_{f,n}) $ ($ \forall n $). In the following lemma, we show that the optimization problem of this paper (Problem \ref{prob:main}) can also be solved via the above value function. 

\begin{Lemma}[Optimal Control Policy of Problem 1] With the value function $ \widetilde{V}_{N_R-n+1}(\widetilde{S}_{f,n}) $ ($ \forall n $), the optimal control policy for Problem \ref{prob:main}, denoted as $ \Omega_{f,n}^*(\widetilde{S}_{f,n},T_{f,n}) $, can be calculated from (\ref{eqn:bellman-random}).
\end{Lemma}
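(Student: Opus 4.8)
The plan is to exploit the memoryless property of the Poisson request process to recast Problem \ref{prob:main}, which has a random number of stages, as a Markov decision process whose augmented state is the pair $(\widetilde{S}_{f,n}, T_{f,n})$ of reduced buffer state and remaining lifetime, and then to express its continuation cost through the fixed-stage value functions $\widetilde{V}_m$ already obtained in Lemma 1. The guiding idea is that, because (\ref{eqn:request}) is memoryless, everything an optimal policy needs to know about the uncertain future at the $n$-th request is summarized by the current remaining lifetime $T_{f,n}$, so the random-horizon problem still admits a one-step Bellman recursion.

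First I would establish that recursion. Let $J(\widetilde{S}, T_{rem})$ denote the minimal expected remaining cost when the current request sees reduced buffer state $\widetilde{S}$ and remaining lifetime $T_{rem}$, counting the current transmission. By the memoryless property, conditioned on being at the current request the waiting time $\tau$ to the next request has density $\lambda_f e^{-\lambda_f \tau}$, and no further request occurs with probability $e^{-\lambda_f T_{rem}}$. The dynamic-programming principle then gives the revised Bellman equation
\begin{equation}\nonumber
J(\widetilde{S}, T_{rem}) = \min_{\Omega_{f,n}}\mathbb{E}_{\eta,\rho}\Big[\sum_{s} g_{f,n,s} + \int_0^{T_{rem}}\lambda_f e^{-\lambda_f\tau}J(\widetilde{S}', T_{rem}-\tau)\,d\tau\Big],
\end{equation}
where $\widetilde{S}'=\widetilde{S}_{f,n+1}$ is the fading-dependent next buffer state induced by $\Omega_{f,n}$, and the decoding constraint (\ref{eqn:cont}) is enforced inside the minimization.

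Second, I would connect the continuation to the $\widetilde{V}_m$. Conditioning on the number of additional requests in the remaining lifetime, which is again Poisson with mean $\lambda_f T_{rem}$, and positing the representation $J(\widetilde{S}, T_{rem}) = \sum_{k\ge 0}\frac{(\lambda_f T_{rem})^k}{k!}e^{-\lambda_f T_{rem}}\widetilde{V}_{k+1}(\widetilde{S})$, I would substitute this ansatz into the continuation integral and use the convolution identity $\int_0^{T}\lambda_f e^{-\lambda_f\tau}\frac{(\lambda_f(T-\tau))^k}{k!}e^{-\lambda_f(T-\tau)}\,d\tau = \frac{(\lambda_f T)^{k+1}}{(k+1)!}e^{-\lambda_f T}$. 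This collapses the continuation into $\sum_{m\ge 1}\frac{(\lambda_f T_{rem})^m}{m!}e^{-\lambda_f T_{rem}}\widetilde{V}_m(\widetilde{S}')$, i.e.\ exactly the fixed-stage value functions of Problem \ref{prob:fix} reweighted by the request statistics of (\ref{eqn:request}). The optimal action at the current request, $\Omega^*_{f,n}(\widetilde{S}_{f,n}, T_{f,n})$, is then the argmin of the current cost plus this weighted continuation, which is the content of (\ref{eqn:bellman-random}).

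The step I expect to be the main obstacle is justifying that the minimization may be performed once, against the Poisson-averaged continuation, rather than separately for each realizable horizon. Conditioning on the realized number of remaining requests reduces each conditional sub-problem to Problem \ref{prob:fix} and invites the matching $\widetilde{V}_m$, but the policy that is actually executed is a single non-anticipating mapping of $(\widetilde{S}_{f,n}, T_{f,n})$, so interchanging the minimization with the expectation over the random number of stages is delicate: a single power/time allocation need not minimize the continuation for every horizon simultaneously, since the incentive to provision the caches grows with the number of future requests. I would attack this by verifying self-consistency of the Poisson-weighted ansatz in the revised Bellman equation above, combining the convolution identity with the fixed-stage recursion $\widetilde{V}_{m}(\widetilde{S}) = \min_{\Omega_{f,n}} \mathbb{E}_{\eta,\rho}[\sum_{s} g_{f,n,s} + \widetilde{V}_{m-1}(\widetilde{S}')]$ (with $\widetilde{V}_0\equiv 0$), and invoking uniqueness of the fixed point guaranteed by the contraction factor $1-e^{-\lambda_f T_{rem}}<1$. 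Establishing that this interchange is tight, and hence that the induced argmin is genuinely optimal for Problem \ref{prob:main} rather than merely a bound, is the crux on which the claim rests.
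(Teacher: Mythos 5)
Your route is essentially the paper's --- condition on the Poisson-distributed number of remaining requests and express the continuation cost through the fixed-stage value functions $\widetilde{V}_m$ of Problem \ref{prob:fix} --- but you have correctly put your finger on a step that neither your write-up nor the paper's own proof actually closes. The paper's proof is two sentences: it observes that $\widetilde{V}_N(\widetilde{S}_{f,n+1})$ is the averaged cost for $N$ requests and that the Poisson pmf gives the probability of $N$ requests in $T_{f,n}$, and it silently performs exactly the min/expectation interchange you flag as the crux. Your self-consistency check makes the difficulty explicit: writing $p_N(T)=\frac{(\lambda_f T)^N}{N!}e^{-\lambda_f T}$, substituting the ansatz $J(\widetilde{S},T)=\sum_{k}p_k(T)\widetilde{V}_{k+1}(\widetilde{S})$ into your revised Bellman equation and applying the convolution identity reproduces the ansatz only if
\begin{equation}\nonumber
\min_{\Omega}\mathbb{E}\Big[\sum_s g_{f,n,s}+\sum_N p_N\,\widetilde{V}_N(\widetilde{S}')\Big]=\sum_N p_N\min_{\Omega}\mathbb{E}\Big[\sum_s g_{f,n,s}+\widetilde{V}_N(\widetilde{S}')\Big],
\end{equation}
and in this problem that interchange genuinely fails: the minimizer on the right depends on $N$ (for $N=0$ the BS should serve only the requesting user at minimal cost, while for large $N$ it should overspend now to load the caches), so no single power/time allocation attains all the inner minima simultaneously and the left side is strictly larger in general. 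Consequently $\sum_N p_N(T)\widetilde{V}_N(\widetilde{S})$ is only a lower bound on the true random-horizon cost-to-go --- it is what a controller could achieve if told the realized number of future requests --- and the policy defined by (\ref{eqn:bellman-random}) is a certainty-equivalence-type heuristic rather than a policy proved optimal for Problem \ref{prob:main}.

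The repair you sketch does not rescue the argument: the contraction property gives uniqueness of the fixed point of the \emph{true} Bellman operator on the augmented state $(\widetilde{S},T_{rem})$, but your ansatz is a fixed point only of the modified operator in which the minimization has been pulled inside the sum over $N$, so uniqueness is being invoked for the wrong operator. A correct development would either (i) define and iterate the value function directly on the augmented state $(\widetilde{S},T_{rem})$, abandoning the decomposition into the $\widetilde{V}_m$, or (ii) retain the decomposition but state the result as a lower bound on the value function together with the induced (suboptimal) policy. As written, the lemma's claim of exact optimality is established neither by your argument nor by the paper's.
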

\begin{proof} 
Equation (\ref{eqn:bellman-random}) is straightforward due to the factor that (1) $ \widetilde{V}_{N}(\widetilde{S}_{f,n+1}) $ denotes the averaged cost for $ N $ times of requests on the $ f $-th file, give the system state $ \widetilde{S}_{f,n+1} $; (2) $ \frac{(\lambda_f T_{f,n})^N}{N!} e^{-\lambda_f T_{f,n}} $ is the probability that there are $ N $ times of file requests within the duration $ T_{f,n} $.
\end{proof}

As a result, the optimal control policy of the Problem \ref{prob:main} can be solved via the following two steps.
\begin{itemize}
	\item \textbf{Value Iteration}: Calculate the value function $ \widetilde{V}_{N_R-n+1}(\widetilde{S}_{f,n})( \forall n, \widetilde{S}_{f,n} ) $  via Bellman's equation with reduced space (\ref{eqn:bellman-reduce}).
	\item \textbf{Policy Iteration}: Obtain optimal control policy from (\ref{eqn:bellman-random}).
\end{itemize}

Note that the state space of the system is actually huge. For example, suppose there are 20 cache nodes in the system and 50 segments per file, the dimensional of system state is $2^{1000}$, which is referred to as the curse of dimensionality. The optimal solution is actually computationally intractable. Hence, we continue to propose a low-complexity solution based on the technique of approximated MDP in the following section.

\begin{figure*}

\begin{eqnarray}
V_{N_R-n+1}(S_{f,n})=\min_{\Omega_{f,n}(S_{f,n})} \bigg\{\sum_{s} g_{f,n,s}(S_{f,n},\Omega_{f,n}) +\sum\limits_{S_{f,n+1}}{V_{N_R-n}(S_{f,n+1})Pr(S_{f,n+1}|S_{f,n},\Omega_{f,n})} \bigg\} \label{eqn:bellman-fix}
\end{eqnarray}
\begin{eqnarray}\label{eqn:bellman-reduce}
\widetilde{V}_{N_R-n+1}(\widetilde{S}_{f,n})=\min_{\Omega_{f,n}(\widetilde{S}_{f,n})}  \mathbb{E}_{\eta,\rho} \bigg\{ \sum_{s} g_{f,n,s}({S}_{f,n},\Omega_{f,n}) +\sum\limits_{\widetilde{S}_{f,n+1}}{\widetilde{V}_{N_R-n}(\widetilde{S}_{f,n+1})Pr(\widetilde{S}_{f,n+1}|{S}_{f,n},\Omega_{f,n})} \bigg\}
\end{eqnarray}
\begin{eqnarray}\label{eqn:bellman-random}
\Omega_{f,n}^*(\widetilde{S}_{f,n},\!T_{f,n})\!=\!\arg\min_{\Omega_{f,n}(\widetilde{S}_{f,n})} \!\!\!\!\! \mathbb{E}_{\eta,\rho} \bigg\{\sum_{s} g_{f,n,s}({S}_{f,n},\!\Omega_{f,n})  +\!\!\!\! \!\sum\limits_{\widetilde{S}_{f,n+1},N}\!\!\!\!\!\!\!\frac{(\lambda_f T_{f,n})^N}{N!} e^{-\lambda_f T_{f,n}}{\widetilde{V}_{N}(\widetilde{S}_{f,n+1})Pr(\widetilde{S}_{f,n+1}|{S}_{f,n},\!\Omega_{f,n})}\! \bigg\}
\end{eqnarray}
\begin{equation}\label{eqn:app-v}
\widetilde{V}_{N_R-n+1}(\widetilde{S}_{f,n}) \approx \widetilde{V}_{N_R-n+1}(\widetilde{S}_{f}^*) + \sum_{\{(i,s)|\forall \mathcal{B}^i_{f,s}(\widetilde{S}_{f,n}) =0\}} \bigg( \widetilde{V}_{N_R-n+1}(\widetilde{S}_{f}^{i,s})-\widetilde{V}_{N_R-n+1}(\widetilde{S}_{f}^*)\bigg)
\end{equation}

\end{figure*}

\section{Low-Complexity Solution via Approximated MDP}\label{sec:approximation}

In this section, we shall first introduce a novel linear approximation on the value function $ \widetilde{V}_{N_R-n+1}(\widetilde{S}_{f,n}) $, and elaborate the on-line control policy to determine the control actions given the current system state and approximated value function.

\subsection{Approximation on Value Function}

We first define the notations for following reference system states.
\begin{itemize}
	\item $ \widetilde{S}_{f}^* $ is the state of $f$-th file where all the cache nodes have successfully decoded the whole file. Thus $
	\widetilde{S}_{f,n}^*= \{\mathcal{B}_{f,s}^{c} =1| \forall c=1,...,N_C\} .$
	
	\item $ \widetilde{S}_{f}^{i,s} $ is the state of $ f $-th file transmission where only the $ s $-th segment at the $ i$-th cache node is not successfully decoded. $
	\widetilde{S}_{f}^{i,s}= \{\mathcal{B}_{f,s}^{i} = 0, \mathcal{B}_{f,t}^{j}=1| \forall (j,t) \neq (i,s) \} .$
\end{itemize}
Hence, we approximate the value function $ \widetilde{V}_{N_R-n+1}(\widetilde{S}_{f,n}) $  as (\ref{eqn:app-v}), where $\mathcal{B}^i_{f,s}(\widetilde{S}_{f,n})  $ means the parameter of $ \mathcal{B}^i_{f,s} $ in the system state $ \widetilde{S}_{f,n} $.

In order to apply this approximation on all value function, it is necessary to obtain the value of $\widetilde{V}_{N_R-n+1}(\widetilde{S}_{f}^*)$ and $\widetilde{V}_{N_R-n+1}(\widetilde{S}_{f}^{i,s})$ for all $ n,i $, and $ s $ via (\ref{eqn:bellman-reduce}). Generally, they can be evaluated via numerically simulation. In the following, however, we provide the analytically expressions for them.

\subsubsection{Evaluation of $ \widetilde{V}_{N_R-n+1}(\widetilde{S}_{f}^*) $}

Note that system state $ \widetilde{S}_{f}^* $ represents the situation that all the cache nodes have already decoded the $ f $-th file, the purpose of downlink transmission is only to make sure that the requesting users, which are outside of the coverage of cache nodes, can decode the downlink file. Hence it is clear that
\begin{eqnarray}\nonumber
&&\widetilde{V}_{N_R-n+1}(\widetilde{S}_{f}^*) \nonumber\\
&=& (N_R-n+1)\Pr(\mathbf{l}_{f,n} \notin \mathcal{C}_{f,n}^s) \nonumber\\
&&\mathbb{E}_{\rho,\eta} \bigg[\min \limits_{[P_{f,n,s} \atop N_{f,n,s}]}  w_e P_{f,n,s}N_{f,n,s} + w_t N_{f,n,s} |\mathbf{l}_{f,n} \notin \mathcal{C}_{f,n}^s\bigg] \nonumber \\
&s.t.& R_{f,n,s} \geq R_f/N_S, \ \forall s.\nonumber
\end{eqnarray}

The above value function can be calculated with analytical expression, which is elaborated below.
\begin{Lemma}\label{Lemma:Segment}
The value function $ \widetilde{V}_{N_R-n+1}(\widetilde{S}_{f}^*) $ is given by
\begin{eqnarray}\nonumber
\widetilde{V}_{N_R-n+1}(\widetilde{S}_{f}^*) 
\approx (N_R-n+1)\Pr(\mathbf{l}_{f,n} \notin \mathcal{C}_{f,n}^s) \nonumber\\
\mathbb{E}_{\rho,\eta} \bigg[\sum_s w_e P_{f,n,s}^*N_{r,f,s}^* + w_t N_{f,n,s}^* |\mathbf{l}_{f,n} \notin \mathcal{C}_{f,n}^s \bigg], \nonumber 
\end{eqnarray}
where $
P_{f,n,s}^* =\frac{w_t} {w_e \mathbb{W}(\frac{2^{\theta}w_t}{ew_e}) },
N_{f,n,s}^* =\frac {R_f}{N_S[\theta+\log_2(P_{f,n,s}^*)]} $, $
\theta =	\mathbb{E}_{\mathbf{h}_{f,n,s}} \left[ \log_2 \left(  \frac{||\mathbf{h}_{f,n,s}||^2}{N_T \sigma^2_z}\right) \right] $, 
and $\mathbb{W}(x)$ is the Lambert-W function \cite{W}.
\end{Lemma}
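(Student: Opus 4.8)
The plan is to exploit two structural simplifications before doing any calculus. First, in the reference state $\widetilde{S}_f^*$ every cache node has already decoded every segment, so any downlink transmission leaves the buffer indicators $\mathcal{B}_{f,s}^c$ unchanged; the state $\widetilde{S}_f^*$ is therefore absorbing under (\ref{eqn:bellman-reduce}). Consequently each of the remaining $N_R-n+1$ requests faces an identical and statistically independent single-stage subproblem, which is exactly why the factor $(N_R-n+1)$ factors out of the value function. Within a single stage, a transmission from the BS incurs cost only when the requesting user lies outside the aggregate cache coverage, an event of probability $\Pr(\mathbf{l}_{f,n}\notin\mathcal{C}_{f,n}^s)$; conditioned on this event the per-segment costs $w_eP_{f,n,s}N_{f,n,s}+w_tN_{f,n,s}$ are additive and the decoding constraints (\ref{eqn:cont}) are separable across $s$, so the single-stage problem decouples into $N_S$ identical scalar problems. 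This reduces the lemma to solving, for one segment, $\min_{P,N}(w_eP+w_t)N$ subject to $N\,\mathbb{E}_{\mathbf{h}}[\log_2(1+\|\mathbf{h}\|^2P/(N_T\sigma_z^2))]\ge R_f/N_S$.

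Second, I would linearise the ergodic-rate constraint. The expectation of $\log_2(1+\mathrm{SNR})$ has no clean closed form, so I invoke the high-SNR approximation $\log_2(1+x)\approx\log_2 x$, under which $\mathbb{E}_{\mathbf{h}}[\log_2(1+\|\mathbf{h}\|^2P/(N_T\sigma_z^2))]\approx\theta+\log_2 P$ with $\theta$ as defined in the statement; this is precisely the source of the ``$\approx$'' in the claimed expression. The constraint becomes affine, $N(\theta+\log_2 P)\ge R_f/N_S$, and since the objective is increasing in $N$ for fixed $P$ the constraint binds at optimality, giving $N=\frac{R_f/N_S}{\theta+\log_2 P}$. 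Substituting this back leaves a single-variable objective $C(P)=\frac{R_f}{N_S}\cdot\frac{w_eP+w_t}{\theta+\log_2 P}$, whose minimiser I find by setting $C'(P)=0$.

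The remaining step is the algebraic heart of the lemma. The stationarity condition simplifies to $w_eP\,[\ln 2\,(\theta+\log_2 P)-1]=w_t$, and the key manoeuvre is the substitution $t=\ln 2\,(\theta+\log_2 P)-1$, equivalently $P=2^{-\theta}e^{\,t+1}$, which turns the condition into $t\,e^{t}=z$ with $z:=\frac{2^{\theta}w_t}{e\,w_e}$. By the defining relation of the Lambert-W function this gives $t=\mathbb{W}(z)$; undoing the substitution through $w_ePt=w_t$ then yields $P^*=\frac{w_t}{w_e\mathbb{W}(z)}$, and the binding constraint supplies $N^*=\frac{R_f/N_S}{\theta+\log_2 P^*}$, matching the stated formulas. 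I expect the main obstacle to be twofold: justifying the high-SNR step as the controlled source of approximation error rather than an exact identity, and recognising how to massage the transcendental stationarity condition into the canonical form $t\,e^{t}=z$ so that Lambert-W applies. I would also confirm that this interior critical point is the global minimiser by noting that $C(P)\to\infty$ both as $\theta+\log_2 P\to 0^{+}$ (which forces $N\to\infty$) and as $P\to\infty$, so the unique stationary point is indeed the minimum.
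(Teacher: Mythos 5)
Your proposal is correct and follows essentially the same route as the paper's Appendix A: apply the high-SNR approximation $\log_2(1+x)\approx\log_2 x$ to make the rate constraint affine, bind the constraint to eliminate $N_{f,n,s}$, and minimize the resulting single-variable objective $\frac{R_f(w_eP+w_t)}{N_S[\theta+\log_2 P]}$ via its first-order condition. You additionally spell out details the paper leaves implicit --- the justification of the $(N_R-n+1)$ factorization and per-segment decoupling, the Lambert-W substitution $t\,e^t=\tfrac{2^\theta w_t}{e w_e}$, and the global-optimality check --- all of which are accurate.
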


\begin{proof}
	Please refer to Appendix A.	
\end{proof}
\subsubsection{Evaluation of $\widetilde{V}_{N_R-n+1}(\widetilde{S}_{f}^{i,s})$}

Given system state $ \widetilde{S}_{f}^{i,s} $ for arbitrary stage, there are only two possible next system states $  \widetilde{S}_{f}^{i,s} $ and $  \widetilde{S}_{f}^{*} $, which are discussed below.

$ \bullet $ When $ \rho_{f,n}  \eta_{f,n,s} \leq \rho_i \eta_{f,n,s}^i $, thus $ R_{f,n,s} \leq R_{f,n,s}^i$, the $ i $-th cache node is alway able to decode the $ s $-th file segment give that the transmission constraint (\ref{eqn:cont}) should be satisfied. Thus the next state must be $  \widetilde{S}_{f}^{*} $. In this case, the optimized RHS of (\ref{eqn:bellman-reduce}) is given by
	\begin{eqnarray} 
	Q_c = \mathbb{E}  \min\limits_{\Omega_{f,n}(S_{f,n})} \bigg\{ \sum_{s} g_{f,n,s}({S}_{f,n},\Omega_{f,n})\bigg\} +{\widetilde{V}_{N_R-n}(\widetilde{S}_{f}^{*})},  \nonumber
	\end{eqnarray}
	subject to constraint (\ref{eqn:cont}).
	
$ \bullet $ When $ \rho_{f,n}  \eta_{f,n,s} > \rho_i \eta_{f,n,s}^i $, thus $ R_{f,n,s} > R_{f,n,s}^i$, the BS can choose to secure the transmission of the $ s $-th segment to the $ (f,n) $-th user or the $ i $-th cache node. Hence the optimized  RHS of (\ref{eqn:bellman-reduce}) is given by $
	\mathbb{E}  \min \bigg \{Q_u({S}_{f,n}),Q_{i,s}({S}_{f,n}) \bigg\}$, 
	where $Q_u$ and $Q_{i,s}$ are defined in (\ref{eqn:a_u}) and (\ref{eqn:a_i}) respectively.

As a result, the expression of $\widetilde{V}_{N_R-n+1}(\widetilde{S}_{f}^{i,s})$ is summarized by the following lemma.
\begin{Lemma} The value function  $\widetilde{V}_{N_R-n+1}(\widetilde{S}_{f}^{i,s})$ is given by (\ref{eqn:v_default one fragment }). Moreover, the optimal control actions for $ Q_c $ and $ Q_u $ are the same as Lemma \ref{Lemma:Segment}. The optimal control action for  $ Q_{i,s} $ is given by $
	P_{f,n,s}=\frac{w_t} {w_e \mathbb{W}(\frac{2^{\theta^i}w_t}{ew_e}) },N_{f,n,s}=\frac {R_f}{N_S[\theta^i+\log_2(P_{f,n,s})]} $, $ \theta^i=	\mathbb{E}_{\mathbf{h}_{f,n,s}^i} \left[ \log_2 \left(  \frac{||\mathbf{h}_{f,n,s}^i||^2}{N_T \sigma^2_z}\right) \right]$; and $ \forall t\neq s $, $
P_{f,n,t}^* =\frac{w_t} {w_e \mathbb{W}(\frac{2^{\theta}w_t}{ew_e}) },
N_{f,n,t}^* =\frac {R_f}{N_S[\theta+\log_2(P_{f,n,t}^*)]}$, $
\theta =	\mathbb{E}_{\mathbf{h}_{f,n,t}} \left[ \log_2 \left(  \frac{||\mathbf{h}_{f,n,t}||^2}{N_T \sigma^2_z}\right) \right] $.
\end{Lemma}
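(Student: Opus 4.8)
The plan is to specialize Bellman's equation with reduced state space (\ref{eqn:bellman-reduce}) to the state $\widetilde{S}_{f}^{i,s}$ and to exploit the fact that this state has essentially a single degree of freedom. First I would observe that, since $\mathcal{B}_{f,s}^{i}$ is the only undecoded indicator in $\widetilde{S}_{f}^{i,s}$ and a decoded segment never reverts, the only reachable successor states are $\widetilde{S}_{f}^{i,s}$ itself and $\widetilde{S}_{f}^{*}$. Consequently the per-stage cost $\sum_{t} g_{f,n,t}$ separates into the segments $t\neq s$, each already cached everywhere and therefore served exactly as in the $\widetilde{S}_{f}^{*}$ analysis of Lemma \ref{Lemma:Segment}, and the single coupled segment $s$, whose transmission simultaneously determines the current cost and the transition to $\widetilde{S}_{f}^{*}$.

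Next I would take the expectation over the large-scale fading in (\ref{eqn:bellman-reduce}) and partition the probability space for segment $s$ by the channel ordering $\rho_{f,n}\eta_{f,n,s}$ versus $\rho_{i}\eta_{f,n,s}^{i}$, as already flagged in the two bullet cases preceding the lemma, together with the covered/uncovered event of the indicator $I(\mathbf{l}_{f,n}\notin \mathcal{C}_{f,n}^{s})$. When $\rho_{f,n}\eta_{f,n,s}\le \rho_{i}\eta_{f,n,s}^{i}$ the cache node enjoys the stronger link, so any transmission meeting the user constraint (\ref{eqn:cont}) automatically lets cache $i$ decode; the successor is forced to be $\widetilde{S}_{f}^{*}$ and the branch value is $Q_{c}$, whose minimization is taken over the user constraint and hence coincides with Lemma \ref{Lemma:Segment}. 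When $\rho_{f,n}\eta_{f,n,s}> \rho_{i}\eta_{f,n,s}^{i}$ the BS genuinely chooses between meeting only the user constraint, leaving the state at $\widetilde{S}_{f}^{i,s}$ with branch value $Q_{u}$ from (\ref{eqn:a_u}) and an optimal action again identical to Lemma \ref{Lemma:Segment}, and additionally forcing cache $i$ to decode, driving the state to $\widetilde{S}_{f}^{*}$ with branch value $Q_{i,s}$ from (\ref{eqn:a_i}); the optimum is the pointwise minimum $\min\{Q_{u},Q_{i,s}\}$.

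The remaining computation is the sub-problem defining $Q_{i,s}$: minimize $w_{e}P_{f,n,s}N_{f,n,s}+w_{t}N_{f,n,s}$ subject to the cache-node rate constraint $R_{f,n,s}^{i}\ge R_{f}/N_{S}$. Using the same approximation of the ergodic rate (\ref{eqn:dl-cache}) as in Lemma \ref{Lemma:Segment}, this constraint reduces at optimality to $N_{f,n,s}\big(\theta^{i}+\log_{2}P_{f,n,s}\big)=R_{f}/N_{S}$, so substituting $N_{f,n,s}$ and differentiating in $P_{f,n,s}$ gives a stationarity condition solved in closed form by the Lambert-W function, exactly as in Appendix A but with $\theta$ replaced by the cache-node statistic $\theta^{i}$; this yields the stated action for $Q_{i,s}$, while the untouched segments $t\neq s$ retain the $\theta$-based action. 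Collecting the covered/uncovered and two channel-ordering branches into the expectation over $\rho,\eta$ then assembles (\ref{eqn:v_default one fragment }), with $\widetilde{V}_{N_R-n}(\widetilde{S}_{f}^{*})$ supplied by Lemma \ref{Lemma:Segment}.

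I expect the main obstacle to be the coupling introduced by the $Q_{u}$ branch: choosing to serve only the user returns the state to $\widetilde{S}_{f}^{i,s}$, so $\widetilde{V}_{N_R-n}(\widetilde{S}_{f}^{i,s})$ reappears on the right-hand side of (\ref{eqn:bellman-reduce}) and the relation is a stage-indexed recursion rather than a closed form. Care is needed to verify that the $\min\{Q_{u},Q_{i,s}\}$ structure is preserved once the expectation over $\rho,\eta$ is taken, and that the recursion, unrolled from the terminal value together with the known $\widetilde{S}_{f}^{*}$ branch, collapses to (\ref{eqn:v_default one fragment }); by comparison the Lambert-W optimization for $Q_{i,s}$ is a routine repetition of Appendix A.
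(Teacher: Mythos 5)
Your proposal is correct and follows essentially the same route as the paper, which itself omits the formal proof as ``similar to that of Lemma~\ref{Lemma:Segment}'': your two-case decomposition by the channel ordering $\rho_{f,n}\eta_{f,n,s}$ versus $\rho_i\eta_{f,n,s}^i$ is exactly the discussion the paper places immediately before the lemma, and your Lambert-W computation for $Q_{i,s}$ is Appendix~A with $\theta$ replaced by $\theta^i$. Your closing observation that (\ref{eqn:v_default one fragment }) is a stage-indexed recursion rather than a closed form --- because the $Q_u$ branch reintroduces $\widetilde{V}_{N_R-n}(\widetilde{S}_{f}^{i,s})$ on the right-hand side --- is accurate and is a subtlety the paper passes over silently.
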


\begin{proof}
The proof is similar to that of Lemma \ref{Lemma:Segment}, and it is omitted here.	
\end{proof}

\begin{figure*}

\begin{eqnarray}\label{eqn:a_u}
	&Q_u({S}_{f,n})= & \min\limits_{\Omega_f(S_{f,n})} \sum_{t} g_{f,n,t}({S}_{f,n},\Omega_{f,n}) +{\widetilde{V}_{N_R-n}(\widetilde{S}_{f}^{i,s})},\ \ s.t. \ \ R_{f,n,t}=R_f/N_S, \  \forall t
\end{eqnarray}
\begin{eqnarray}\label{eqn:a_i}
&Q_{i,s}({S}_{f,n})= & \min\limits_{\Omega_f(S_{f,n})} \sum_{t} g_{f,n,t}({S}_{f,n},\Omega_f) +{\widetilde{V}_{N_R-n}(\widetilde{S}_{f}^{*})},\ \ s.t. \ \  R_{f,n,s}^i = R_f/N_S \mbox{ and } R_{f,n,t}=R_f/N_S, \  \forall t
\end{eqnarray}
\begin{eqnarray}\label{eqn:v_default one fragment }
\widetilde{V}_{N_R-n+1}(\widetilde{S}_{f}^{i,s}) \!=\! \mathbb{E}_{\eta,\rho}[Q_c|R_{f,n,s} \!\leq\! R_{f,n,s}^i]Pr(R_{f,n,s} \!\leq\! R_{f,n,s}^i) \!+\!\mathbb{E}_{\eta,\rho}[\min \{Q_u, Q_{i,s}\}|R_{f,n,s}\! >\! R_{f,n,s}^i]Pr(R_{f,n,s} \!>\! R_{f,n,s}^i)
\end{eqnarray}
\begin{equation}\label{eqn:upper}
\widetilde{V}_{N_R-n+1}(\widetilde{S}_{f,n}) \leq  \widetilde{V}_{N_R-n+1}(\widetilde{S}_{f,n}^*)+ \sum_{\{(i,s)|\forall \mathbf{B}^i_{f,s}(\widetilde{S}_{f,n}) = 0\}}  \bigg( \widetilde{V}_{N_R-n+1}(\widetilde{S}_{f}^{i,s})-\widetilde{V}_{N_R-n+1}(\widetilde{S}_{f,n}^*)\bigg)
\end{equation}
\begin{equation}\label{eqn:lower}
\widetilde{V}_{N_R-n+1}(\widetilde{S}_{f,n}) \geq  \widetilde{V}_{N_R-n+1}(\widetilde{S}_{f,n}^*)+ \sum_{{\{(i,s)|\forall \mathbf{B}^i_{f,s}(\widetilde{S}_{f,n}) = 0\}}} \bigg( \widetilde{V}_{1}(\widetilde{S}_{f}^{i,s})-\widetilde{V}_{1}(\widetilde{S}_{f,n}^*)\bigg)
\end{equation}	
\end{figure*}

\subsection{Online Control}

With the value function $\widetilde{V}_{N_R-n+1}(\widetilde{S}_{f}^*)$ and $\widetilde{V}_{N_R-n+1}(\widetilde{S}_{f}^{i,s})$, the value function for arbitrary system state in arbitrary transmission stage can be approximated via (\ref{eqn:app-v}). Hence the online control action for arbitrary system state $ S_{f,n} $, denoted as $ \Omega_{f,n}^{*}({S}_{f,n}) $, can be obtained the following optimization problem.

\begin{Problem}[Online Optimization]\label{prob:online}
\begin{eqnarray}
\Omega_{f,n}^{*}({S}_{f,n})&=\arg\min& \sum_s  g_{f,n,s}({S}_{f,n},\Omega_{f,n})  + \nonumber\\
&&\sum\limits_{N}\frac{(\lambda_f T_{f,n})^N}{N!} e^{-\lambda_f T_{f,n}}{\widetilde{V}_{N}(\widetilde{S}_{f,n+1})}\nonumber\\
&s.t.& \mbox{(\ref{eqn:cont})}, \ \forall s. \nonumber
\end{eqnarray}

\end{Problem}

Since the value function $ \widetilde{V}_{N}(\widetilde{S}_{f,n+1}) $ is approximated by (\ref{eqn:app-v}), the optimization in Problem \ref{prob:online} can be further decoupled for each segment. For the $ s $-th segment ($ \forall s $), the solution of Problem \ref{prob:online} can be obtained by the following problem, given that the requesting user cannot find the segment from nearby cache nodes.
\begin{Problem}[Online Optimization for the $ s $-th Segment]\label{prob:online-s}
	\begin{eqnarray}
	&&\{P_{f,n,s}^{*}, N_{f,n,s}^{*}\}\nonumber\\
	&=\arg\min& g_{f,n,s}({S}_{f,n},\Omega_{f,n})  + \sum\limits_{N}\frac{(\lambda_f T_{f,n})^N}{N!} e^{-\lambda_f T_{f,n}} \nonumber\\
	&&\sum\limits_{\{i|\forall \mathcal{B}^i_{f,s}(\widetilde{S}_{f,n+1})=0\}} \widetilde{V}_{N}(\widetilde{S}_{f}^{i,s}) - \widetilde{V}_{N}(\widetilde{S}_{f}^{*})\nonumber\\
	&s.t.& \mbox{(\ref{eqn:cont})}, \nonumber
	\end{eqnarray}
where $ \widetilde{S}_{f,n+1} $ is the next system state, and $ \mathcal{B}^i_{f,s}(\widetilde{S}_{f,n+1}) $ represents the buffer status for the $ (f,s) $-th segment in the $ i $-th cache node.
\end{Problem}

Due to the second term of objective in Problem \ref{prob:online-s}, the BS should first choose the cache nodes for downlink receiving, in addition to the requesting user. Based on the selection, the optimal power and transmission time can be derived. Note that this is an integrated continuous and discrete optimization, its solution is summarized below.

\begin{Solution}
	
Given the system state $ S_{f,n} $, let $ d_1,d_2,.. $ be the indexes of cache nodes, whose large-scale attenuation to the BS in the $ s $-th segment is worse than the $ (f,n) $-th user. Moreover, without loss of generality, it is assumed that $ \rho_{d_1}\eta_{f,n,s}^{d_1} \leq  \rho_{d_2}\eta_{f,n,s}^{d_2} \leq ... \leq \rho_{f,n,s}\eta_{f,n,s}$. The optimal control action for the $ s $-th segment ($ \forall s $) can be obtained below.
\begin{itemize}
	\item For each $ i $, suppose the $ d_i $-th cache node are involved for downlink receiving, the optimal power and transmission time control is given by
	\begin{eqnarray}
	&&Q_{d_i,s}^{*}({S}_{f,n})\nonumber\\
	&=\min\limits_{P_{f,n,s} \atop N_{f,n,s}} &\!\!\! g_{f,n,s}({S}_{f,n},\Omega_{f,n})  + \sum\limits_{N}\frac{(\lambda_f T_{f,n})^N}{N!} e^{-\lambda_f T_{f,n}} \nonumber\\
	&&\sum\limits_{j=\{d_1,...,d_{i-1}\}} \widetilde{V}_{N}(\widetilde{S}_{f}^{j,s}) - \widetilde{V}_{N}(\widetilde{S}_{f}^{*})\nonumber\\
	& s.t. &R_{f,n,s}^{d_i} = R_f/N_S. \nonumber
	\end{eqnarray}
	The optimal solution, denoted as $[P_{f,n,s}^{d_i},N_{f,n,s}^{d_i}]$, can be derived similar to Lemma \ref{Lemma:Segment}. 
	\item Let $
d^{*}=\arg \min\limits_{d_i} Q_{d_i,s}^* $, the solution of Problem \ref{prob:online-s} is then given by $
[P_{f,n,s}^{*},N_{f,n,s}^{*}]=[P_{f,n,s}^{d^*},N_{f,n,s}^{d^*}]. $
	
\end{itemize}

\end{Solution}

\subsection{Bound on Value Function Approximation}
In this section, we shall provide the bound on gap between the approximated value function and the actual value function. First of all, we introduce the following bounds on the actual value function.

\begin{Lemma}[Bounds of Value Function] \label{lem:bound} The upper-bound in (\ref{eqn:upper}) holds for value function $ \widetilde{V}_{N_R-n+1}(\widetilde{S}_{f,n}) $ ($ \forall n $). Moreover, if there is no overlap in the service region of cache nodes, the lower-bound in (\ref{eqn:lower}) also holds for  $ \widetilde{V}_{N_R-n+1}(\widetilde{S}_{f,n}) $.
\end{Lemma}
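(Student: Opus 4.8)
The plan is to reduce the bound on the full value function to a per-segment, per-cache-node statement, and then attack the two inequalities by different mechanisms: a suboptimal superposition policy for the upper bound, and a monotonicity-in-the-number-of-remaining-stages argument for the lower bound. First I would exploit the fact that the instantaneous cost $\sum_s g_{f,n,s}$ is additive across segments and that the buffer status $\mathcal{B}^c_{f,s}$ of segment $s$ evolves only under the transmission $(P_{f,n,s},N_{f,n,s})$ allocated to that segment. Hence the reduced Bellman recursion (\ref{eqn:bellman-reduce}) separates exactly into $N_S$ independent per-segment sub-MDPs; writing $W_k(A)$ for the optimal $k$-stage value of one segment whose set of non-decoding cache nodes is $A\subseteq\{1,\dots,N_C\}$ (the same function for every $s$ by the i.i.d. assumption on the segments), we have $\widetilde{V}_{N_R-n+1}(\widetilde{S}_{f,n})=\sum_s W_{N_R-n+1}(U_s)$ with $U_s=\{i:\mathcal{B}^i_{f,s}(\widetilde{S}_{f,n})=0\}$, and $\widetilde{V}_{N_R-n+1}(\widetilde{S}_f^{*})$, $\widetilde{V}_{N_R-n+1}(\widetilde{S}_f^{i,s})$ correspond to $W_k(\emptyset)$ and $W_k(\{i\})$. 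Both claimed bounds then follow from the single statement, for every stage index $k$ and every missing set $A$,
\begin{equation}\nonumber
W_k(\emptyset)+\sum_{i\in A}\big(W_1(\{i\})-W_1(\emptyset)\big)\ \le\ W_k(A)\ \le\ W_k(\emptyset)+\sum_{i\in A}\big(W_k(\{i\})-W_k(\emptyset)\big),
\end{equation}
with the left inequality invoked only under the no-overlap hypothesis.

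For the upper bound I would construct a feasible (hence suboptimal) policy for state $A$ by superposing the single-node solutions. At each request the BS delivers the segment to the user exactly as in the reference problem $\widetilde{S}_f^{*}$ (this accounts for the common term $W_k(\emptyset)$, i.e.\ users lying outside all cache coverage), and in addition, for every still-missing node $i\in A$, it allocates the incremental power/time that the single-missing problem $W_k(\{i\})$ would use to secure node $i$. Since the cost in the definition of $g_{f,n,s}$ is additive and each node decodes according to its own channel, the expected cost of this policy telescopes, through (\ref{eqn:bellman-reduce}), to $W_k(\emptyset)+\sum_{i\in A}\big(W_k(\{i\})-W_k(\emptyset)\big)$; optimality of $W_k(A)$ then gives the inequality. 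The two points requiring care are that the superposed policy remains feasible, the user's decoding constraint (\ref{eqn:cont}) being already met by the baseline transmission, and that the shared baseline service is credited once, which is exactly why the reference term $W_k(\emptyset)$ is subtracted once per node in the sum.

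For the lower bound I would first use the no-overlap hypothesis to make the per-node subproblems genuinely independent: when the regions $\mathcal{C}_i$ are disjoint, a requesting user lies in at most one $\mathcal{C}_i$, so the events ``the user needs the BS because of node $i$'' are disjoint and the buffer states $\{\mathcal{B}^i\}_{i\in A}$ evolve on disjoint request streams. This upgrades the upper-bound decomposition to an exact identity $W_k(A)=W_k(\emptyset)+\sum_{i\in A}\big(W_k(\{i\})-W_k(\emptyset)\big)$. It then remains to compare the $k$-stage marginal $\Delta^i_k:=W_k(\{i\})-W_k(\emptyset)$ with the single-stage marginal $\Delta^i_1$. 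I would show $\Delta^i_k$ is nondecreasing in $k$ by induction on the Bellman recursion: expanding one stage, the immediate extra cost incurred when a user falls in the exclusive region of node $i$ is at least $\Delta^i_1$ (serving such a user is mandatory while node $i$ is missing, whereas it is free in state $\widetilde{S}_f^{*}$), and the continuation term is $\Delta^i_{k-1}$ weighted by the probability that node $i$ is still missing, which is nonnegative. Hence $\Delta^i_k\ge\Delta^i_{k-1}\ge\dots\ge\Delta^i_1$, and replacing each $\Delta^i_k$ by $\Delta^i_1$ in the exact identity yields the stated lower bound.

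The step I expect to be the main obstacle is controlling the interaction between the per-node subproblems. Overlapping coverage together with the multicast nature of the downlink, where a single transmission can be decoded incidentally by several cache nodes as already exploited in Lemma~\ref{Lemma:Segment} and in the online-control solution, couples the buffer states across nodes, so both the clean superposition and the exact decoupling break unless this interference is bounded; this is precisely what the no-overlap assumption removes for the lower bound and what the feasibility argument for the superposed policy must absorb for the upper bound. The secondary technical point is the monotonicity $\Delta^i_k\ge\Delta^i_1$, whose inductive proof relies on the fact that at the terminal stage the optimal action never spends power to update a cache node, there being no future request to benefit, so that $\Delta^i_1$ is a pure serving cost and constitutes a valid floor for every longer horizon.
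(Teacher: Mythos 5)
Your per-segment decomposition and your upper-bound argument are sound in spirit and essentially match the paper's reasoning: the paper also justifies (\ref{eqn:upper}) by observing that a multicast aimed at securing one cache node is overheard by others with better channels, so the sum of isolated marginals $\widetilde{V}_{N_R-n+1}(\widetilde{S}_f^{i,s})-\widetilde{V}_{N_R-n+1}(\widetilde{S}_f^{*})$ over-counts the true extra cost; your superposed policy makes the same point constructively, with the caveat that a multicast cannot literally superpose per-node power increments --- the joint transmission costs the maximum, not the sum, of the individual requirements, which fortunately is the right direction for an upper bound.

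The lower bound, however, contains a genuine gap. Your first step claims that disjoint coverage regions $\mathcal{C}_i$ make the per-node subproblems independent, so that $W_k(A)=W_k(\emptyset)+\sum_{i\in A}\big(W_k(\{i\})-W_k(\emptyset)\big)$ holds exactly. This is false: the sets $\mathcal{C}_i$ only govern which users a cache node can serve over D2D, not which downlink multicasts it can overhear. A transmission triggered by a user outside all coverage, or inside some $\mathcal{C}_j$ with $j\neq i$, can still fill node $i$'s buffer whenever node $i$'s downlink channel is strong enough, so the buffer processes $\{\mathcal{B}^i_{f,s}\}_{i\in A}$ remain coupled no matter how the coverage regions are arranged. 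Indeed, if the decomposition were exact, your own upper bound would be an equality, contradicting the strict over-counting that motivates it. Since this exact identity is the first link in your chain $W_k(A)=W_k(\emptyset)+\sum_i\Delta^i_k\ge W_k(\emptyset)+\sum_i\Delta^i_1$, the lower bound does not follow as written. The repair is to argue (\ref{eqn:lower}) directly, which is what the paper (tersely) does: at the first of the $N_R-n+1$ remaining requests, every pair $(i,s)$ with $\mathcal{B}^i_{f,s}=0$ is still missing, and with disjoint coverage the events ``the user must be served by the BS because of node $i$'' are disjoint across $i\in A$; hence the first request alone already incurs expected extra cost at least $\sum_{(i,s)}\big(\widetilde{V}_1(\widetilde{S}_f^{i,s})-\widetilde{V}_1(\widetilde{S}_f^{*})\big)$, each term being exactly the unavoidable one-shot cost of serving a user in $\mathcal{C}_i$, while every remaining request costs at least its $\widetilde{S}_f^{*}$ counterpart. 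Your monotonicity observation $\Delta^i_k\ge\Delta^i_1$ is correct, but it only carries the case $|A|=1$; the general case needs the disjointness of the serving events, not an exact decoupling of the dynamics.
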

\begin{proof}
Please refer to Appendix B.
\end{proof}

Notice that the proposed linear approximation on value function is actually the upper-bound in (\ref{eqn:upper}), the gap between the approximated value function and the actual value function $ \widetilde{V}_{N_R-n+1}(\widetilde{S}_{f,n}) $, denoted as $ \mathcal{E}_{N_R-n+1}(\widetilde{S}_{f,n}) $, is given by
\begin{eqnarray} 
&&\mathcal{E}_{N_R-n+1}(\widetilde{S}_{f,n}) \leq  \sum_{\{(i,s)|\forall \mathcal{B}^i_{f,s}(\widetilde{S}_{f,n}) = 0\}} \nonumber\\
&&\bigg\{\widetilde{V}_{N_R-n+1}(\widetilde{S}_{f}^{i,s})-\widetilde{V}_{N_R-n+1}(\widetilde{S}_{f}^*)	-\widetilde{V}_{1}(\widetilde{S}_{f}^{i,s})+\widetilde{V}_{1}(\widetilde{S}_{f}^*)\bigg\}.\nonumber
\end{eqnarray}
According to the definition of value function, the average system cost on the $ f $-th file with optimal control can be written as
\begin{equation}
\overline{g}_f^{*} = \sum_{N_R}\frac{(\lambda_f T)^{N_R}}{N_R!} e^{-\lambda_f T} \widetilde{V}_{N_R} (\widetilde{S}_f^0), \label{eqn:system}
\end{equation}
where $ \widetilde{S}_f^0 $ denotes the system state with empty buffer in all cache nodes. When the linear approximation in (\ref{eqn:app-v}) is used, we can evaluate $ \overline{g}_f^{*} $ analytically with at most $\sum_{N_R} \frac{(\lambda_f T)^{N_R}}{N_R!} e^{-\lambda_f T} \mathcal{E}_{N_R} (\widetilde{S}_f^0) $ error.

\section{Simulation}\label{sec:sim}

In the simulation, the radius of one cell is $ 500 $ meters, cache nodes are randomly deployed on the cell-edge region with a service radius of $ 90 $ meters. The number of antennas at the BS is 8. The downlink path loss exponent is $3.5$. The file size is $ 140 $Mb, and is further divided into $ 10 $ segments. The transmission bandwidth is $ 20 $MHz. The weights on transmission energy and time are $w_e=1$ and $w_t=100$ respectively. In addition to the proposed algorithm, the performance of the following two baseline schemes is also compared.

\begin{Baseline}
The BS only ensures the segment delivery to the requesting users in each transmission. The cache nodes with better channel condition to the BS can also decode the file segments.
\end{Baseline}
\begin{Baseline}The BS ensures that all the cache nodes can decode the downlink file in the first transmission. Hence, all the cache nodes can help to forward the file since the second file request.
\end{Baseline}

\begin{figure}[tb]
	\centering
	\includegraphics[scale = 0.6]{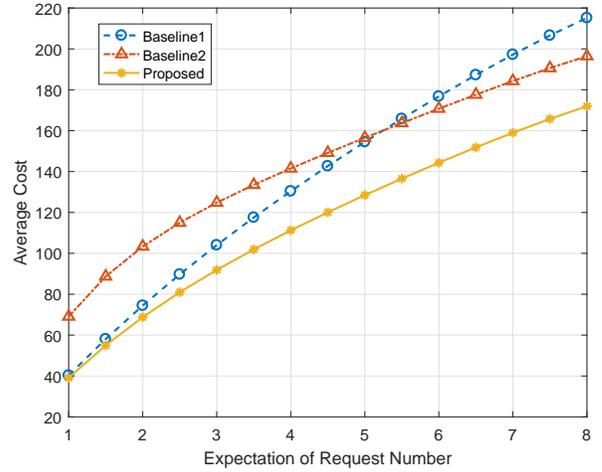}
	\caption{The average total cost versus the expectation of request times, where the number of cache nodes is 20.}
	\label{fig:20caches}
\end{figure}
\begin{figure}[tb]
	\centering
	\includegraphics[scale = 0.6]{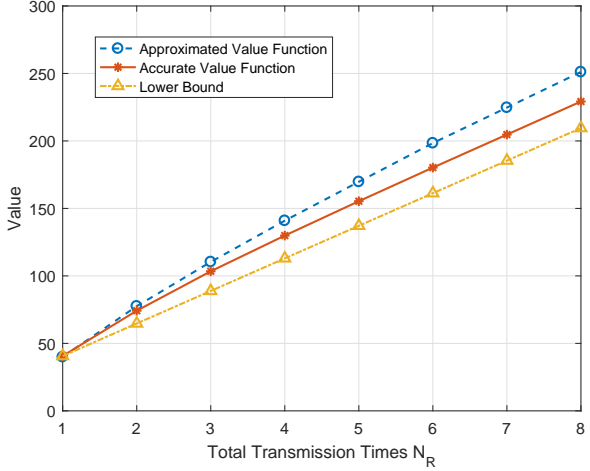}
	\caption{Illustration of value function and its bounds.}
	\label{fig:bound_full}
\end{figure}

The performance of the proposed low-complexity algorithm is compared with the above two baselines in Fig.\ref{fig:20caches}, where the number of cache nodes is $ 20 $. It can be observed that the proposed algorithm is superior to the two baselines for any expected number of requests per file lifetime. Moreover, the Baseline 1 has better performance than Baseline 2 when the popularity of the file is high (larger expected number of file requests).

The approximation error on the value function is illustrated in Fig.\ref{fig:bound_full}, where the value function (average remaining cost) for initial system state $\widetilde{S}^0_{f}$ is plotted with different $ N_R $ (total number of stages). Both upper and lower bounds derived in Lemma \ref{lem:bound} are plotted with the actual value function. It is shown that both bounds are tight, and the approximation error is small. 

\section{Conclusion}\label{sec:con}
We consider the downlink file transmission with the assistance of cache nodes in this paper. Specifically, the number of requests of one file within its lifetime is modeled as a Poisson point process, and the downlink resource minimization problem can be formulated as a Markov decision process with random number of stages. We first propose a revised Bellman's equation, where the optimal control policy can be derived. In order to address the curse of dimensionality, we also introduce a low-complexity sub-optimal solution based on linear approximation of value function. The approximated value function can be calculated analytically, so that conventional value iteration can be eliminated. Finally, we derive a bound on the gap between the approximated value function and the real value function. It is shown by numerical simulation that the proposed algorithm (with the proposed approximated MDP approach) can significantly reduce the resource consumption at the BS.

\section*{Appendix A: Proof Of Lemma \ref{Lemma:Segment} }
First of all, we have the following approximation on the throughput $ R_{f,n,s} $.
\begin{eqnarray}
R_{f,n,s} &\approx& N_{f,n,s} \mathbb{E}_{\mathbf{h}_{f,n,s}} \left[ \log_2 \left(  \frac{||\mathbf{h}_{f,n,s}||^2 P_{f,n,s}}{N_T \sigma^2_z} \right) \right]\nonumber \\
&&=N_{f,n,s}[\theta+\log_2(P_{f,n,s})]\nonumber. 
\end{eqnarray}
With $R_{f,n,s}=R_f/N_S$, we  have $
N_{f,n,s}=\frac{R_f}{N_S[\theta+\log_2(P_{f,n,s})]}$.
Hence the original optimization becomes
\begin{eqnarray}
&\min \limits_{P_{f,n,s} }  &f(P_{f,n,s})= \min \limits_{P_{f,n,s} }  \frac{R_f(w_eP_{f,n,s}+w_t)}{N_S[\theta+\log_2(P_{f,n,s})]} \nonumber
\end{eqnarray}   
Taking first-order derivative on $f(P_{f,n,s})$, the optimal transmission power $ P_{f,n,s}^* $ can be obtained.

\section*{Appendix B: Proof Of Lemma \ref{lem:bound} }

Compared with the cost $ \widetilde{V}_{N_R-n+1}(\widetilde{S}_{f}^*) $ , the additional cost in $ \widetilde{V}_{N_R-n+1}(\widetilde{S}_{f,n}) $ comes from the following two cases: (1) users fall into the coverage of certain cache node, but this cache node does not have the desired file segments; (2) users fall into the region without cache nodes, but the BS would like to spend more transmission resource so that some cache nodes can decode the downlink segments together with the users. 

Note that when the BS is transmitting file segments to one cache nodes, some other cache nodes with better channel condition can also decode the segments, we have $ \sum\limits_{\{(i,s)|\forall \mathbf{B}^i_{f,s}(\widetilde{S}_{f,n}) = 0\}}  \bigg( \widetilde{V}_{N_R-n+1}(\widetilde{S}_{f}^{i,s})-\widetilde{V}_{N_R-n+1}(\widetilde{S}_{f,n}^*)\bigg) $ is greater than $ \widetilde{V}_{N_R-n+1}(\widetilde{S}_{f,n})-\widetilde{V}_{N_R-n+1}(\widetilde{S}_{f}^*) $.

Moreover, the additional cost $  \widetilde{V}_{1}(\widetilde{S}_{f}^{i,s})-\widetilde{V}_{1}(\widetilde{S}_{f,n}^*) $ comes from the extra transmission resource of the BS, which is exactly to serve one user in the coverage of the $ i $-th cache node. Thus the lower-bound is also straightforward.

\bibliographystyle{IEEEtran}
\bibliography{icc}

\end{document}